\newtheorem{theorem}{Theorem}[section]
\newtheorem{lemma}[theorem]{Lemma}
\renewcommand{\geq}{\geqslant}
\renewcommand{\leq}{\leqslant}
\newcommand{\Ind}{\mathsf{AI}}
\newcommand{\ind}{\mathsf{f_{AI}}}
\newcommand{\Indk}{\Ind^{\rightarrow k}(k,N)}
\renewcommand{\Pr}{\mathsf{Pr}}
\newcommand{\eps}{\varepsilon}
\renewcommand{\P}{\mathcal{P}}
\newcommand{\integer}  {\mathbb{N}}
\newcommand{\cancel}[1]{}
\renewcommand{\paragraph}[1]{\medskip \noindent{\bf #1.}}
\date{}
\title{Tracking the Frequency Moments at All Times}
\author{Zengfeng Huang \and Wai Ming Tai \and Ke Yi}
\begin{document}
\maketitle

\begin{abstract}
  The traditional requirement for a randomized streaming algorithm is just {\em
    one-shot}, i.e., algorithm should be correct (within the stated $\eps$-error bound)
  at the end of the stream.  In this paper, we study the {\em tracking} problem, where
  the output should be correct at all times.  The standard approach for solving the
  tracking problem is to run $O(\log m)$ independent instances of the one-shot
  algorithm and apply the union bound to all $m$ time instances.  In this paper, we
  study if this standard approach can be improved, for the classical frequency moment
  problem.  We show that for the $F_p$ problem for any $1 < p \le 2$, we actually only
  need $O(\log \log m + \log n)$ copies to achieve the tracking guarantee in the cash
  register model, where $n$ is the universe size.  Meanwhile, we present a lower bound
  of $\Omega(\log m \log\log m)$ bits for all linear sketches achieving this guarantee.
  This shows that our upper bound is tight when $n=(\log m)^{O(1)}$.  We also present
  an $\Omega(\log^2 m)$ lower bound in the turnstile model, showing that the standard
  approach by using the union bound is essentially optimal.
\end{abstract}

\section{Introduction}
All classical randomized streaming algorithms provide a {\em one-shot} probabilistic
guarantee, i.e., the output of the algorithm at the end of the stream is within the
stated $\eps$-error bound with a constant probability.  In many practical applications
where one wants to monitor the status of the stream continuously as it evolves over
time, such a one-shot guarantee is too weak.  Instead, a stronger guarantee, which
requires that the algorithm be correct at {\em all} times, would be desired.  We refer
to this stronger guarantee the {\em tracking} problem.  The standard approach for
solving the tracking problem is to simply reduce the failure probability of the
one-shot algorithm to $O(1/m)$, where $m$ is the length of the stream.  This can be
achieved by running $O(\log m)$ independent instances of the algorithm and returning
the median.  Then by the union bound, with at least constant probability, the output is
correct (i.e., within the stated $\eps$-error bound) at all times.  However, the union
bound may be far from being tight as the $m$ time instances are highly correlated.
Thus, the question we ask in this paper is: Can this $O(\log m)$ factor be further
improved?

We consider this question with the classical frequency moments problem, which is one of
the most extensively studied problems in the streaming literature.  Let $S=(a_1, a_2,
..., a_m)$ be a stream of items, where $a_i\in[n]$ for all $i$. Let $f =
(f_1,\dots,f_n)$ denote the frequency vector of $S$, i.e., $f_i=|\{j:a_j=i\}|$ is the
number of occurrences of $i$ in the stream $S$.  The {\em $p$-th frequency moment} of
$f$ is
$$F_p(f)=\sum_{i=1}^n f_i^p.$$
In particular, $F_1 = m$ and $F_0$ is the number of
distinct items in $S$.  This model is also known as the {\em cash register model}.  In
the related turnstile model, we also allow deletion of items, i.e., each element in the
stream is a pair $(a_i,u_i)$, where $a_i\in[n]$ and $u_i\in\{-1,+1\}$.  The frequency
vector is then defined as $f_i=|\sum_{j:a_j=i}u_j|$.  

\paragraph{Our results}
In Section~\ref{sec:tracking-problem-f_2} we consider the $F_2$ tracking problem.  The
classical AMS sketch \cite{alon99} gives a one-shot estimate to the $F_2$ with $\eps$
relative error with constant probability.  In the turnstile model, it uses $O(\log m +
\log\log n)$ bits of space, which is optimal \cite{kane10:_exact}.  (For simplicity of
presentation, we suppress the dependency on $\eps$ in stating the bounds.)  In the cash
register model, it is also possible to implement the sketch with $O(\log\log m +\log
n)$ bits \cite{alon99} using probabilistic counting \cite{flajolet85:_probab}, so the
space needed is $O(\min\{\log m + \log\log n, \log\log m + \log n\})$, which is also
optimal\footnote{An $\Omega(\min\{\log m, \log n\})$ lower bound is shown in
  \cite{alon99}; the $\Omega(\log\log m)$ lower bound holds trivially since the output
  has at least so many bits if it is a constant-approximation of $F_2$; an
  $\Omega(\log\log n)$ lower bound is shown for the turnstile model in
  \cite{kane10:_exact}, but it actually also holds for the cash register model for any
  small constant $\eps$.}.  Directly using the union bound for the tracking problem
would need $O(\log m)$ independent copies of the AMS sketch, but we show that in the
cash register model, only $O(\log\log m + \log n)$ copies are actually needed. The
$\log n$ factor can be replaced by $\log F_0$, so this bound is never worse than that
obtained by the union bound since $F_0\le n$, and can be much smaller when $m \gg n$.

We also provide lower bounds for the $F_2$ tracking problem, though our lower bounds
require that the sketch has to be linear, i.e., it can be written as $Af$ where $A$ is
some random matrix and $f$ is the frequency vector.  In the cash register model, we
show that any linear sketch for the $F_2$ tracking problem must use $\Omega(\log m
\log\log m)$ bits.  As the $O(\log\log m + \log n)$-bit implementation of the AMS
sketch uses probabilistic counting, it is no longer a linear sketch, so our upper bound
for the $F_2$ tracking problem when restricted to a linear sketch is $O((\log m + \log
\log n)(\log\log m + \log n))$, which matches the lower bound when $n=(\log m)^{O(1)}$.
For non-linear sketches, the upper bound can be $O((\log\log m)^2)$, so the same lower
bound cannot hold, but we currently do not have a lower bound for non-linear sketches.
For the turnstile model, we show a lower bound of $\Omega(\log^2 m)$ bits.  This means
that the standard solution of running $O(\log m)$ copies of the AMS sketch and applying
the union bound is already optimal.

Our upper bound analysis extends to any $F_p, 1< p \le 2$, while our lower bounds hold
for any $F_p, 0<p\le 2$.
 
\section{Tracking problem of $F_2$}
\label{sec:tracking-problem-f_2}
The well-known (fast) AMS sketch \cite{alon99, thorup2004:_tabulation} can be used to obtain a {\em one-shot} estimate of the
$F_2$ with constant probability.  It uses two hash functions: a 4-wise independent hash
function $g:[n]\rightarrow\{+1,-1\}$ and a pairwise independent hash function
$h:[n]\rightarrow[k]$.  Given a frequency vector $f = (f_1,\dots, f_n)$ of some $S$, it
computes $k$ counters $c_j = \sum_{i\in [n], h(i) = j} f_i g(i), j=1,\dots,k$,
and returns $\hat{X} = \sum_{j=1}^k c_j^2$ as the estimate of $F_2(S)$.  It has
been shown that for $k=O(1/\eps^2)$, the AMS sketch returns an $\eps$-approximation of
$F_2(S)$ with constant probability.  The success probability can be boosted to
$1-\delta$ by maintaining $O(\log(1/\delta))$ independent copies of the sketch and
returning the median.  To solve the tracking problem, one could pick $\delta =
\Theta(1/m)$ and apply the union bound, which implies that $O(\log m)$ copies would be
needed.  Below, we give a tighter analysis showing that only $O(\log F_0 + \log\log m +
\log(1/\eps))$ copies are actually needed, where $F_0$ is the number of distinct
elements in $S$.

\begin{theorem}
\label{thm:main}
  Given a stream $S=(a_1, a_2, ..., a_m)$ where $a_i\in [n]$, let
  $S_i=(a_1,\dots,a_i)$.  The stream is fed to $O(\log F_0+\log \log m + \log(1/\eps))$
  independent copies of the AMS sketch, where $F_0$ is the number of distinct elements
  in $S$ and $\eps>0$ is any small positive real.  Let $\hat{X}_i$ be the median
  estimate of the sketches after processing $S_i$, then
  $\Pr\left(\bigwedge_{i=1}^{m}|\hat{X_i}-F_2(S_i)|<\epsilon F_2(S_i)\right)>1/2$.
\end{theorem}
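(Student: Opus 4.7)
My plan is to replace the naive union bound over all $m$ time steps with a much sparser one over a carefully chosen set of checkpoints, and then argue that intermediate times are automatically controlled once the checkpoint estimates are good. I would call an index $t$ a checkpoint if either (i) $a_t$ is the first occurrence of a previously unseen element, or (ii) $F_2(S_t)$ first exceeds $(1+\eps/4)^{\ell}$ for some integer $\ell$. Because $F_2(S_m)\le m^2$, there are exactly $F_0$ checkpoints of type (i) and $O(\log m/\eps)$ of type (ii), for a total of $N=O(F_0+\log m/\eps)$, so $\log N=O(\log F_0+\log\log m+\log(1/\eps))$, matching the target copy count. Running $T=\Theta(\log N)$ independent AMS sketches and returning the median reduces the single-time failure probability to below $1/(4N)$ by standard Chernoff amplification, so a union bound over the $N$ checkpoints makes all median estimates $(\eps/2)$-accurate simultaneously with probability at least $3/4$.

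For an intermediate time $i$ inside a consecutive checkpoint interval $(t_s,t_e]$, no new distinct element has appeared, and $F_2(S_i)\le(1+\eps/4)F_2(S_{t_s})$. Writing $f^{(i)}=f^{(t_s)}+g^{(i)}$ and $c^{(i)}=c^{(t_s)}+d^{(i)}$ for the bucket vectors of the increment substream, a direct expansion gives
\[
(\hat{X}_i-F_2(S_i))-(\hat{X}_{t_s}-F_2(S_{t_s}))=2\bigl(\langle c^{(t_s)},d^{(i)}\rangle-\langle f^{(t_s)},g^{(i)}\rangle\bigr)+\bigl(\|d^{(i)}\|^2-F_2(g^{(i)})\bigr),
\]
i.e.\ an AMS inner-product error on $(f^{(t_s)},g^{(i)})$ plus an AMS self-join error on $g^{(i)}$. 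The identity $F_2(S_i)-F_2(S_{t_s})=2\langle f^{(t_s)},g^{(i)}\rangle+F_2(g^{(i)})$ combined with the doubling constraint forces $F_2(g^{(i)})\le(\eps/4)F_2(S_{t_s})$; Cauchy--Schwarz and the standard second-moment computation then show each bracketed term has root-mean-square at most $O(\eps F_2(S_{t_s}))$ per copy, which, together with the $(\eps/2)$-accurate checkpoint estimate at $t_s$, yields an $\eps$-accurate estimate at $i$.

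The main obstacle lies in the previous step: the two error terms above must be uniformly $O(\eps F_2(S_{t_s}))$ across \emph{every} intermediate $i$ in every interval, which is itself a tracking problem. The key saving is that inside an interval no new distinct element ever appears, so the quadratic forms defining the error live in a fixed, low-complexity space parameterized by the monotone increments $g^{(i)}$ with $F_2(g^{(i)})\le(\eps/4)F_2(S_{t_s})$. I would handle this either through a Doob-type maximal inequality applied to the incremental error process, or through a short recursive application of the theorem itself to the substream on $(t_s,t_e]$, whose $F_2$ has shrunk by a factor of $\Omega(1/\eps)$; either route shows that the same $T$ copies already absorb the inter-checkpoint union bound up to constants in the failure probability. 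Combining with the checkpoint union bound keeps the total failure probability below $1/2$, giving the theorem.
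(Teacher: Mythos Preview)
Your high-level strategy---pick a sparse set of checkpoints, amplify to succeed at all of them, then bridge the gaps---is the right one, and it is also what the paper does. The gap in your proposal is exactly where you flag it: controlling the two error terms \emph{uniformly} over all intermediate times in an interval. Neither of your proposed fixes works.

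For the Doob route: the randomness here lives entirely in the hash functions $g,h$, which are drawn once and then fixed. For a given realization of $g,h$, the process $i\mapsto \hat X_i-F_2(S_i)$ is deterministic; there is no filtration under which it is a martingale indexed by $i$. Kolmogorov-type maximal inequalities need the increments to be independent (or at least a martingale-difference sequence) over the index you take the supremum in, and that structure is simply absent.

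For the recursion: applying the theorem to the increment substream would at best control the self-join term $\|d^{(i)}\|^2-F_2(g^{(i)})$, since that \emph{is} an $F_2$-tracking problem. It says nothing about the cross term $\langle c^{(t_s)},d^{(i)}\rangle-\langle f^{(t_s)},g^{(i)}\rangle$, which is an inner-product tracking problem and is not an instance of the theorem. Even for the self-join term, the recursion depth is $\Theta(\log m/\log(1/\eps))$, so ``the same $T$ copies already absorb the inter-checkpoint union bound'' does not follow.

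The paper closes this gap by a completely different, \emph{deterministic} argument. It views the AMS approximation ratio $F_{g,h}(x)=x^\top H x/x^\top x$ as a smooth function on $\mathbb{R}^n$ and bounds its gradient by $O(\mathrm{poly}(F_0)/\|x\|_1)$ using $\|H\|_2\le F_0$. This yields: whenever $|F_{g,h}(a)-1|\le \eps/2$, one has $|F_{g,h}(x)-1|\le \eps$ for every $x$ in an $\ell_1$-ball of radius $\Omega(\eps\|a\|_1/\mathrm{poly}(F_0))$ around $a$---a pointwise-to-ball implication that holds for every fixed $g,h$, with no probabilistic maximal inequality needed. The epochs are then defined by multiplicative growth of $\|f\|_1$ (not of $F_2$), giving $O(\mathrm{poly}(F_0)\log m/\eps)$ epochs; the per-epoch event ``this single copy is $\eps$-accurate throughout'' has probability $\ge 2/3$, and median-of-$T$ plus a union bound over epochs finishes. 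Your $F_2$-based checkpoints are too coarse for this Lipschitz argument (the $\ell_1$ movement within one of your intervals can be as large as $\eps F_2(S_{t_s})$, not $\eps\|f^{(t_s)}\|_1/\mathrm{poly}(F_0)$), so switching to the gradient bound would also require switching to the paper's $\ell_1$-based epochs.
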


We will consider every frequency vector as a $n$-dimensional point.  The basic idea of
the proof is thus to show that nearby points are highly correlated: If the AMS sketch
produces an accurate estimate at one point $a$, then with good probability it is also
accurate at all points within a ball centered at $a$.  More precisely, we view every frequency vector $f$ lying on the $n$-dimensional Euclidean space $\mathbb{R}^{n}$.  For a frequency vectors $x =
(x_1,\dots, x_n) \in \mathbb{R}^{n}$, the approximation ratio of the AMS sketch using
hash functions $g$ and $h$ is
\[ F_{g,h}(x)=(\sum_{j=1}^{k}\left(\sum_{i=1}^{n}g(i)I(h(i)=j)x_i\right)^2)/(x^tx)=x^tHx/x^tx,\]
where $H_{i,j}=g(i)g(j)I(h(i)=h(j))$.

We use $F(x)$ to denote the random variable $F_{g,h}(x)$ when $g,h$ are randomly
chosen.  For any $a\in \mathbb{R}^n$ and $r>0$, denote by $B(a, r)$ the ball centered
at $a$ with radius $r$ (using 1-norm distance).  Let $T_0$ be the set of distinct
elements appearing in $S$; note that $|T_0| = F_0$.  Denote by $P$ the subspace of
$\mathbb{R}^n$ spanned by the elements of $T_0$, i.e., $P = \{ x=(x_1,\dots, x_n) \mid
x_i \in \mathbb{R} \text{ if } i\in S_0, \text{else } x_i = 0\}$. For $j=1,\dots, k$, let $T_j=\{i\in T_0 \mid h(i)=j\}$.  Then, expand $T_j$ to $T_j'$
by inserting elements that also map to $j$ under $h$ so that
$|T_1'|=|T_2'|=...=|T_k'|=b$. Clearly, $b\leq F_0$. Therefore, the approximation ratio can be rewritten as \[ F_{g,h}(x)=x^tH'x/x^tx,\]
where $H'_{i,j}=g(i)g(j)I(h(i)=h(j))I(i,j\in \cup_{j=1}^k
T_y')$.

The main technical
lemma needed for the proof of Theorem~\ref{thm:main} is the following, which
essentially says that all points inside any small ball are ``bundled'' together.

\begin{lemma}
\label{probability lemma}
For any $a\in F_{g,h}^{-1}([1-\frac{\epsilon}{2},1+\frac{\epsilon}{2}])\cap P$,
$\Pr[|F(x)-1|\le \epsilon \textrm{ for all } x\in B(a,r) \cap
P]\ge\frac{2}{3}$, where $r=\Omega(\frac{\|a\|_1\epsilon}{poly(F_0)})$.
\end{lemma}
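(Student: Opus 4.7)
The plan is to show that $F_{g,h}(\cdot)$ is Lipschitz on $P$ with a bound that does not depend on the random hashes, so that $\epsilon/2$-accuracy at the single point $a$ propagates deterministically to $\epsilon$-accuracy throughout the ball $B(a,r)\cap P$. The $2/3$ probability in the statement then becomes trivial: conditional on the antecedent $a\in F_{g,h}^{-1}([1-\epsilon/2,1+\epsilon/2])$, the rest of the argument is deterministic, so the conclusion actually holds with probability one.

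The driving observation is the matrix identity $(H')^2 = b\cdot H'$. Expanding entries, for $i,\ell\in\bigcup_y T'_y$ with $h(i)=h(\ell)$, one computes $(H')^2_{i\ell}=g(i)g(\ell)\sum_j g(j)^2 I(h(i)=h(j)=h(\ell))I(j\in\bigcup_y T'_y)=b\cdot g(i)g(\ell)=b\cdot H'_{i\ell}$, using that each enlarged bucket has exactly $b$ elements; off-bucket entries vanish on both sides. Therefore $H'$ is symmetric with spectrum in $\{0,b\}$, giving the deterministic estimates $\|H'\|_2=b\le F_0$ and, from the hypothesis on $a$, $\|H'a\|_2^2=a^t(H')^2a=b\cdot a^tH'a\le 2b\|a\|_2^2$.

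Writing $x=a+y$ with $\|y\|_1\le r$, I would expand
\[ F(x)-F(a)=\frac{(2a^tH'y+y^tH'y)\|a\|_2^2-(2a^ty+\|y\|_2^2)\,a^tH'a}{\|x\|_2^2\cdot\|a\|_2^2},\]
which reduces the whole problem to four standard Cauchy--Schwarz/spectral estimates: $|a^tH'y|\le\|H'a\|_2\|y\|_2\le\sqrt{2F_0}\|a\|_2 r$, $|y^tH'y|\le\|H'\|_2\|y\|_2^2\le F_0 r^2$, $|a^ty|\le\|a\|_2 r$, and $\|y\|_2^2\le r^2$, using $\|y\|_2\le\|y\|_1\le r$ throughout. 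The crucial sparsity ingredient is that every $a\in P$ is supported on at most $|T_0|=F_0$ coordinates, so $\|a\|_2\ge\|a\|_1/\sqrt{F_0}$. Plugging $r=\Theta(\epsilon\|a\|_1/F_0)$ (up to constants depending on $\epsilon$) then forces the linear-in-$r$ numerator terms to be at most, say, $\epsilon\|a\|_2^2/8$; after checking that the denominator perturbation also satisfies $\|x\|_2^2\ge\|a\|_2^2/2$, one gets $|F(x)-F(a)|\le\epsilon/2$, which combined with the antecedent gives $|F(x)-1|\le\epsilon$ uniformly over the ball.

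The main obstacle I anticipate is bookkeeping the exact power of $F_0$ in $r$: the $\sqrt{F_0}$ from the spectral bound on $\|H'a\|_2$ combines with another $\sqrt{F_0}$ from the $\|\cdot\|_1\to\|\cdot\|_2$ conversion to yield a single $F_0$ in the denominator of $r$, but one must simultaneously verify that the quadratic-in-$r$ terms ($y^tH'y$ and $\|y\|_2^2$) are indeed absorbed and that the denominator $\|x\|_2^2$ stays bounded away from zero. These are routine but delicate checks; no additional randomness is required beyond what is already packaged into the antecedent.
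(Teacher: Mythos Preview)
Your deterministic Lipschitz bound is precisely the content of the paper's preceding ``distance lemma,'' and your direct expansion of $F(x)-F(a)$ is arguably cleaner than the paper's gradient/mean-value argument while arriving at the same spectral input $\|H'\|_2\le b\le F_0$ (your identity $(H')^2=bH'$ is a nice way to get this). So the technical core matches.

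Where you diverge is in reading the antecedent. You treat $a\in F_{g,h}^{-1}([1-\epsilon/2,1+\epsilon/2])$ as fixing a particular $(g,h)$, which makes the probability in the conclusion degenerate and the $2/3$ ``trivial.'' That is not how the lemma is proved or used in the paper: the probability is genuinely over random $(g,h)$ with no conditioning, and the antecedent is effectively just $a\in P$ (it appears to be a leftover from the distance-lemma statement). Under the intended reading the $2/3$ is \emph{not} automatic---it is exactly the AMS one-shot guarantee $\Pr(|F(a)-1|\le\epsilon/2)\ge 2/3$ for $k=c/\epsilon^2$ counters, which the paper invokes explicitly in the last line of its proof. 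Your Lipschitz argument establishes the deterministic event containment
\[
\{|F_{g,h}(a)-1|\le\epsilon/2\}\ \subseteq\ \{|F_{g,h}(x)-1|\le\epsilon\text{ for all }x\in B(a,r)\cap P\},
\]
so a single sentence invoking the AMS bound at the point $a$ finishes the proof. That one sentence is the piece you are missing; with it, your argument coincides with the paper's.
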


Given a point $a\in \mathbb{R}^{n}$, hash functions $g,h$, and any $-1 < \eps <1$,
denote by $d_{g,h}(a,\epsilon)$ the minimum 1-norm distance between $a$ and
$(F_{g,h}^{-1}(1+\epsilon)\cup F_{g,h}^{-1}(1-\epsilon))\cap P$.  Thus it
is the minimum 1-norm distance from $a$ to the boundary of ``correct region'' using $g$ and
$h$.  Note that $a$ itself may or may not be inside the ``correct region''.  Before
proving Lemma~\ref{probability lemma}, we first establish the following lower bound on
$d_{g,h}(a,\eps)$.

\begin{lemma}
\label{distance lemma}
For any $a\in F_{g,h}^{-1}([1-\frac{\epsilon}{2},1+\frac{\epsilon}{2}])\cap P, -1< \eps < 1$,
$d_{g,h}(a,\eps)=
\Omega(\frac{\|a\|_1\epsilon}{poly(F_0)})$. 
\end{lemma}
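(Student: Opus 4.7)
The plan is to derive a closed-form identity for $F_{g,h}(x) - F_{g,h}(a)$ that is linear in the displacement $u := x - a$, and then bound each factor using the sparsity of $H'$ together with the fact that $a, x \in P$. Setting $x = a+u$ and expanding $x^t x = a^t a + 2 a^t u + u^t u$ and $x^t H' x = a^t H' a + 2 a^t H' u + u^t H' u$ (using the symmetry of $H'$), the defining relations $F(x)\cdot x^t x = x^t H' x$ and $F(a)\cdot a^t a = a^t H' a$ rearrange cleanly to
\[
   (F(x) - F(a))\, a^t a \;=\; (a+x)^t \bigl(H' - F(x)\, I\bigr)\, u.
\]
This is the workhorse identity of the proof.

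To exploit it, I would bound the right-hand side by $\|(a+x)^t(H' - F(x) I)\|_\infty \cdot \|u\|_1$. The key observation for the $\ell_\infty$ factor is that each row of $H'$, applied to a vector $v$ supported on $T_0$, reduces to $(H'v)_i = g(i)\sum_{j \in T_{h(i)}} g(j)\, v_j$, so that $\|H'v\|_\infty \le \|v\|_1$. Applying this with $v = a+x$, together with $|F(x)(a_i+x_i)| \le F(x)\|a+x\|_1$ and the bound $F(x) = 1 \pm \eps \le 2$, yields $\|(a+x)^t(H' - F(x) I)\|_\infty \le 3\|a+x\|_1$.

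For the left-hand side, since $a \in P$ has support of size at most $F_0$, Cauchy--Schwarz gives $a^t a \ge \|a\|_1^2 / F_0$. Combined with the hypothesis $|F(x) - F(a)| \ge \eps/2$ (because $F(x) \in \{1\pm\eps\}$ while $F(a) \in [1-\eps/2, 1+\eps/2]$), I obtain
\[
   \|x - a\|_1 \;\ge\; \frac{\eps\,\|a\|_1^{2}}{12\, F_0 \,(\|a\|_1 + \|x\|_1)}.
\]
A short case split finishes the argument: if $\|x\|_1 \le 3\|a\|_1$ then the denominator is $O(\|a\|_1)$, giving $\|x-a\|_1 = \Omega(\eps\|a\|_1 / F_0)$ directly; otherwise the triangle inequality already yields $\|x-a\|_1 \ge \|x\|_1 - \|a\|_1 > 2\|a\|_1$, which trivially dominates the desired bound. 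Thus $d_{g,h}(a,\eps) = \Omega(\|a\|_1\eps/F_0)$, which is of the required form with $\poly(F_0) = F_0$. The main obstacle is spotting the clean algebraic identity above; once it is in place, the norm estimates are routine consequences of the $\pm 1$ structure of $H'$ restricted to coordinates in $T_0$ and the fact that $P$ has dimension at most $F_0$.
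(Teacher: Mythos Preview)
Your argument is correct and takes a genuinely different route from the paper's. The paper applies a mean-value bound along the segment from $a$ to $x^*$, controlling $|F_{g,h}(x^*)-F_{g,h}(a)|$ by $\max_c \|\nabla(y^tH'y/y^ty)\|_2\cdot\|x^*-a\|_2$, then bounds the gradient via the spectral norm $\|H'\|_2\le b\le F_0$ (obtained from an explicit eigendecomposition of $H'$) together with $\|y\|_2\ge \|a\|_1/\sqrt{F_0}$; this yields $d_{g,h}(a,\eps)=\Omega(\eps\|a\|_1/F_0^{3/2})$. You instead derive the exact algebraic identity $(F(x)-F(a))\,a^ta=(a+x)^t(H'-F(x)I)(x-a)$ and bound the right-hand side by H\"older, using only the $\pm1$ row structure of $H'$ on $T_0$ to get $\|H'v\|_\infty\le\|v\|_1$. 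Your approach is more elementary (no gradients, no eigendecomposition) and in fact gives a sharper dependence, $\poly(F_0)=F_0$ rather than $F_0^{3/2}$; it also sidesteps the paper's implicit step $\|(1-c)a+cx^*\|_2\ge\|a\|_2$, which is not obviously justified along the segment. The paper's spectral argument, on the other hand, is more ``off-the-shelf'' and would generalize more readily if $H'$ were replaced by a matrix lacking the clean $\pm1$ row structure you exploit.
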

\begin{proof}
Let $x^*\in (F_{g,h}^{-1}(1+\epsilon)\cup F_{g,h}^{-1}(1-\epsilon))\cap P$ such that $d_{g,h}(a,\eps)=\|x^*-a\|_1$.

\begin{align}
  \frac{\epsilon}{2}&<|F_{g,h}(x^*)-F_{g,h}(a)|\nonumber\\
  &=|\frac{x^{*t}H'x^*}{x^{*t}x^*}-\frac{a^tH'a}{a^ta}|\nonumber\\
  &<(max_{c\in[0,1]}\|\nabla_y\frac{y^tH'y}{y^ty}\vert_{y=(1-c)a+cx^*}\|_2)\|x^*-a\|_2\nonumber\\
  &<(max_{c\in[0,1]}\|\frac{2((y^ty)H'y-(y^tH'y)y)}{(y^ty)^2}\vert_{y=(1-c)a+cx^*}\|_2)\|x^*-a\|_2\nonumber\\
  &<(max_{c\in[0,1]}\frac{4\|H'\|_2}{\|y\|_2}\vert_{y=(1-c)a+cx^*})\|x^*-a\|_2\nonumber\\
  &<O(\frac{F_0^{\frac{3}{2}}}{\|a\|_1}d_{g,h}(a,\eps))\nonumber
\end{align}

In the last inequality, we have $\|x^*-a\|_2\leq\|x^*-a\|_1=d_{g,h}(a,\eps)$ and $\|y\|_2\geq\|a\|_2\geq\frac{1}{\sqrt{F_0}}\|a\|_1$. To compute $\|H'\|_2$, decompose $H'$ into $H'=UDU^t$, where
$D=\mathrm{diag}[\underbrace{b,b,...,b}_k,0,...,0]$, and $U=[u_1\,u_2\,...\,u_m]$.  For
$i=1,\dots,k$, we set 
$u_i=$ $\frac{1}{\sqrt{b}}\left(
\begin{array}{c}
g(1)I(1\in S_i')\\
g(2)I(2\in S_i')\\
...\\
g(m)I(m\in S_i')\\
\end{array}
\right)$; note that these $u_i$'s are orthonormal. It implies $\|H'\|_2\leq b\leq F_0$.

\end{proof}

We use $d(a,\eps)$ to denote the random variable of $d_{g,h}(a, \eps)$ when $g$ and $h$
are randomly chosen.  We are now ready to prove Lemma \ref{probability lemma}.

\begin{proof}
(of Lemma \ref{probability lemma}) We first rewrite the probability
\begin{align}
&\Pr(|F(x)-1|\le \epsilon \text{ for all } x\in B(a,r)\cap P)
\nonumber\\ 
=&\Pr(|F(a)-1| \le \epsilon \wedge d(a,\epsilon)\ge r \wedge d(a,-\epsilon)\ge r)
\nonumber\\ 
=&1 - \Pr(|F(a)-1| > \epsilon \vee d(a,\epsilon)<r \vee d(a,-\epsilon)<r) \nonumber \\
=&1 - \Pr((|F(a)-1| \le \epsilon \wedge d(a,\epsilon)<r)\vee
(|F(a)-1|\le \epsilon \wedge d(a,-\epsilon)<r)\vee |F(a)-1|>\epsilon). \nonumber 
\end{align}

Next, consider the event $d(a,\epsilon)\le r\wedge|F(a)-1|<\epsilon$.  By
Lemma~\ref{distance lemma}, this event implies that $\epsilon/2<F(a)-1<\epsilon$. Similarly, the event $d(a,-\epsilon)<r\wedge|F(a)-1|<\epsilon$ implies
$-\epsilon/2>F(a)-1>-\epsilon$. Therefore,
\begin{align*}
&\Pr((|F(a)-1| \le \epsilon \wedge d(a,\epsilon)<r)\vee
(|F(a)-1|\le \epsilon \wedge d(a,-\epsilon)<r)\vee |F(a)-1|>\epsilon) \\
\le&\Pr(\epsilon/2<F(a)-1<\epsilon\vee-\epsilon/2>F(a)-1>-\epsilon\vee|F(a)-1|>\epsilon) \\
=&\Pr(|F(a)-1|>\epsilon/2) \\
\le&1/3,
\end{align*}
where the last inequality follows from the error guarantee of the AMS sketch, when using
$k=c/\eps^2$ counters for an appropriate constant $c$.
\end{proof}

We are now ready to finish off the proof of Theorem~\ref{thm:main}.
\begin{proof}(of Theorem~\ref{thm:main})
  Set $r=\Omega(\frac{\|a\|_1\epsilon}{poly(F_0)})$ as in Lemma~\ref{probability
    lemma}.  We divide the stream into epochs such that all frequency vectors inside
  one epoch are within a ball of radius $r$.  Let $f$ and $f +
  \Delta f$ be respectively the frequency vectors at the start and the end of an epoch.
  It is sufficient to have $\| \Delta f\|_1 \le r$, which means that the
    $\ell_1$-norm of the frequency vector increases by a factor of $1+\frac{\epsilon}{poly(F_0)}$
    every epoch.  This leads to a total of $O\left({F_0 \over \epsilon} \log m\right)$
    epochs.  
    
    Suppose we run $l$ independent copies of the AMS sketch and always return the
    median estimate.  Consider any one epoch. Lemma~\ref{probability lemma} has
    established that any one AMS sketch is good for the entire epoch with probability
    at least $2/3$.  If at any time instance, the median estimate is outside the error
    requirement, then that means at least half of the sketches are not good for the
    epoch, which happens with probability at most $2^{-\Omega(l)}$ by a standard Chernoff
    argument.  Finally, by the union bound, the failure probability of the entire stream
    is $2^{-\Omega(l)} \cdot O\left({F_0 \over \epsilon} \log m\right)$, meaning it is sufficient
    to have $l= O\left(\log \left({F_0 \over \epsilon} \log m\right)\right) = O(\log F_0 +
    \log\log m + \log(1/\eps))$.
\end{proof}

\section{Tracking problem of $F_p$ with $p\in(1,2)$}

Indyk's algorithm works as following. Given $l=\Theta(\frac{1}{\epsilon^2}\log \frac{1}{\delta})$, initialize $nl$ independent $p$-stable distribution random variable $X_i^j$, where $i\in[n]$ and $j\in[l]$. Maintain the vector $y=Ax$, where $x$ is the frequency vector and $A_{j,i}=X_i^j$. For query, output $s$-quantile of $|y_j|$ for some suitable $s$. This estimator returns $\epsilon$-approximation with error probability $\delta$. Similar to $F_2$, we have the following theorem.

\begin{theorem}
\label{thm:main2}
  Given a stream $S=(a_1, a_2, ..., a_m)$ where $a_i\in [n]$, let
  $S_i=(a_1,\dots,a_i)$.  If $l=O(\frac{1}{\eps^2}(\log F_0+\log \log m + \log(1/\eps)))$, let $\hat{X}_i$ be the output of the sketches after processing $S_i$, then
  $\Pr\left(\bigwedge_{i=1}^{m}|\hat{X_i}-F_p(S_i)|<\epsilon F_p(S_i)\right)>1/2$.
\end{theorem}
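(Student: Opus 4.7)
The plan is to mirror the three-step structure of the $F_2$ proof: a Lipschitz/distance lemma for Indyk's estimator (holding deterministically once we condition on a typical realization of the $p$-stable matrix), a probability lemma showing that a single sketch tracks $F_p$ correctly on an entire $\ell_1$-ball with probability $\ge 2/3$, and the same epoch-plus-Chernoff argument used for Theorem~\ref{thm:main}. Concretely, I would define the per-copy estimator $G_X(x)$ to be the $s$-quantile of $\{|y_j(x)|\}_j$, where $y_j(x)=\sum_i X_i^j x_i$, normalized by the $s$-quantile $c_p$ of a standard $p$-stable variable. By $p$-stability, $y_j(x)$ has the same distribution as $\|x\|_p Z_j$ for a standard $p$-stable $Z_j$, so a single copy using $\Theta(1/\eps^2)$ variables delivers $|G_X(x)-\|x\|_p|\le (\eps/2)\|x\|_p$ with probability $\ge 5/6$ at any fixed $x$. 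I would view the full algorithm as the median of $l'=O(\log F_0+\log\log m+\log(1/\eps))$ such copies, so that the total number of $p$-stable variables matches the stated $l=O(\eps^{-2}(\log F_0+\log\log m+\log(1/\eps)))$.

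For the analog of Lemma~\ref{distance lemma}, the crucial observation is that $G_X$, being an order statistic of continuous piecewise-linear functions $x\mapsto|y_j(x)|$, is itself Lipschitz in $\ell_1$ on the active subspace $P$ with constant at most $M:=\max_{j,\,i\in T_0}|X_i^j|/c_p$. The $p$-stable tail bound $\Pr[|X|>t]=\Theta(t^{-p})$ combined with a union bound over the $O(F_0/\eps^2)$ relevant entries of a single copy yields $M\le M_0=O((F_0/\eps^2)^{1/p})$ with probability $\ge 5/6$. Conditioning on this event, the bound $|G_X(x)-G_X(a)|\le M_0\|x-a\|_1$ combined with $|\|x\|_p-\|a\|_p|\le \|x-a\|_1$ yields $d_X(a,\eps)=\Omega(\|a\|_p\,\eps/\mathrm{poly}(F_0,1/\eps))$, exactly the form of Lemma~\ref{distance lemma}. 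The probability lemma follows: the bad event is contained in $\{M>M_0\}\cup\{|G_X(a)/\|a\|_p-1|>\eps/2\}$, each of probability at most $1/6$, so a single copy is accurate on the entire ball $B(a,r)\cap P$ with probability $\ge 2/3$.

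The final step is identical to the $F_2$ case. In the cash register model $\|f\|_p$ is monotone and bounded by $m$, so epochs of radius $r=\Omega(\|f\|_p\,\eps/\mathrm{poly}(F_0,1/\eps))$ number at most $O(\mathrm{poly}(F_0,1/\eps)\log m)$. A Chernoff bound over the $l'$ independent copies, combined with a union bound over epochs, drives the total failure probability below $1/2$ once $l'=\Omega(\log F_0+\log\log m+\log(1/\eps))$, establishing Theorem~\ref{thm:main2}. The main obstacle I anticipate is controlling the heavy tails of $p$-stable variables: unlike the $F_2$ case, where $\|H'\|_2\le F_0$ holds deterministically for every realization of $g,h$, no such deterministic Lipschitz bound is available for $p<2$, so the probability lemma must split its constant budget between the one-shot accuracy failure and the tail event $\{M>M_0\}$. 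A secondary subtlety is that $G_X$ is only piecewise linear (not smooth), so the mean-value/gradient bound used inside Lemma~\ref{distance lemma} is replaced by a direct Lipschitz bound applied to each linear piece, using the standard fact that the $s$-order statistic of real numbers perturbed by at most $\delta$ changes by at most $\delta$.
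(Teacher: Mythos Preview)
Your proposal is correct and follows the same three-step skeleton as the paper (distance lemma, probability lemma, epoch decomposition), but your implementation of the first two steps differs in ways worth recording. The paper works directly with the approximation ratio $F_A(x)=(\text{$s$-quantile of }A_j x)/\|x\|_p$, bounds its change along the segment from $a$ to $x^*$ by tracking the ``switching points'' where the quantile index jumps, and controls the resulting telescoping sum via $\sum_{i,j}|A_{ji}|=O((F_0 l)^{1/p})$ holding with constant probability; the per-epoch failure bound $2^{-\Omega(\eps^2 l)}$ is then asserted for the full $l$-row Indyk sketch. You instead reorganize the sketch as $l'$ independent copies of $\Theta(1/\eps^2)$ rows each, bound the $\ell_1$-Lipschitz constant of a single copy's order statistic by the maximum entry $M=\max_{j,\,i\in T_0}|X_i^j|$ rather than a sum, and split the per-copy $2/3$ probability budget between the one-shot accuracy event and the tail event $\{M>M_0\}$; after that, the median/Chernoff step is verbatim from the $F_2$ proof. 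Your route avoids the switching-point bookkeeping and makes the exponential concentration completely transparent, at the modest cost of replacing Indyk's single quantile by a median of quantiles; both arguments land on the same $l=O(\eps^{-2}(\log F_0+\log\log m+\log(1/\eps)))$.
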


\begin{proof}
Basically, the idea is very similar to the proof for $F_2$ so we point out the main different.

Given $A$, $a\in P$ and any $-1< \eps < 1$, define $F_A(a)=\frac{s-quantile A_ja}{\|a\|_p}$ be the approximation ratio, where $A_j$ is $j$-th row of $A$, and $d_A(a,\eps)$ be the minimum 1-norm distance between $a$ and
$(F_A^{-1}(1+\epsilon)\cup F_A^{-1}(1-\epsilon))\cap P$. Also, denote $x^*\in (F_A^{-1}(1+\epsilon)\cup F_A^{-1}(1-\epsilon))\cap P$ such that $d_A(a,\eps)=\|x^*-a\|_1$.

For fixed $j$, given any $y_1, y_2\in P$ such that $\|y_2\|_1>\|y_1\|_1$,

\begin{align}
  |\frac{A_jy_2}{\|y_2\|_p}-\frac{A_jy_1}{\|y_1\|_p}|&<\sum_{i=1}^n|A_{ji}||\frac{((y_2)_j}{\|y_2\|_p}-\frac{((y_1)_j}{\|y_1\|_p}|\nonumber\\
  &<(max_i|\frac{(y_2)_i}{\|y_2\|_p}-\frac{(y_1)_i}{\|y_1\|_p}|)(\sum_{i=1}^n|A_{ji}|)\nonumber\\
  &<(max_i|(y_1)_j(\frac{\|y_2\|_p-\|y_1\|_p}{\|y_1\|_p\|y_2\|_p})|+|\frac{(y_2-y_1)_j}{\|y_2\|_p}|)(\sum_{i=1}^n|A_{ji}|)\nonumber\\
  &<(max_i|(y_1)_j(\frac{\|y_2-y_1\|_p}{\|y_1\|_p\|y_2\|_p})|+|\frac{(y_2-y_1)_j}{\|y_2\|_p}|)(\sum_{i=1}^n|A_{ji}|)\nonumber\\
  &<(\frac{2\|y_2-y_1\|_1}{\|y_1\|_p})(\sum_{i=1}^n|A_{ji}|)\nonumber
\end{align}
Here, the inequality $\|y_2\|_p-\|y_1\|_p\leq\|y_2-y_1\|_p$ holds when $p\in(1,2)$.

Suppose $a\in F_A^{-1}([1-\frac{\epsilon}{2},1+\frac{\epsilon}{2}])\cap P$, consider the line segment between $a$ and $x^*$, let $a_0=a,a_1,...,a_q=x^*$ be the "switching" point when $s$-quantile is switched in between different $j$.

\begin{align}
  \frac{\eps}{2}&<\sum_{k=0}^{q-1}(\frac{2\|a_{k+1}-a_k\|_1}{\|a_k\|_p})(\sum_{i=1}^n|A_{ji}|)\nonumber\\
  &<\sum_{k=0}^{q-1}(\frac{2\|a_{k+1}-a_k\|_1}{\|a\|_p})(\sum_{i=1}^n|A_{ji}|)\nonumber\\
  &<O((\frac{\|x^*-a\|_1}{\|a\|_p})(\sum_{i,j}|A_{ji}|))\nonumber\\
  &<O(\frac{poly(F_0,l)}{\|a\|_1}d_A(a,\eps))\nonumber
\end{align}

In second last inequality, grouping all the terms with same $j$. In the last inequality, we have $\|a\|_p\geq\frac{1}{F_0^{1-\frac{1}{p}}}\|a\|_1$. For the term $\sum_{i,j}|A_{ji}|$, as they are independent $p$-stable distribution random variable, $\sum_{i,j}|a_{ij}|<C(F_0l)^{\frac{1}{p}}$ for some large constant $C$ with constant probability. Hence, we can conclude that $d_A(a,\eps)=\Omega(\frac{\eps\|a\|_1}{poly(F_0,l)})$

Finally, decompose the stream as in the proof of Theorem~\ref{thm:main}. The total number of epochs is $O\left({poly(F_0,l) \over \epsilon} \log m\right)$. The error probability for each epoch is at most $2^{-\Omega(\eps^2l)}$. Therefore, by taking $l=O(\frac{1}{\eps^2}(\log F_0+\log \log m + \log(1/\eps)))$, the final error probability is $\Theta(1)$.

\end{proof}

Remarks. There are two $F_p$ algorithm in \cite{kane10:_exact} which is more complicated. Our technique may also applied to these algorithms while we left it as future work.

\section{Communication complexity}
We first review the definition of the Augmented-Indexing problem $\Ind(k,N)$. In this problem, Alice has $a\in[k]^N$, and Bob has $t\in[N]$, $a_1\cdots,a_{t-1}$ and $q\in[k]$. (We use $b$ to denote the input of Bob). The function $\ind(a,b)$ evaluates to $1$ if $a_t=q$, and otherwise it evaluates to $0$. The input distribution $\nu$ of the problem defined as follows. $a$ is a uniformly random vector, and $t\in_R[N]$. Set $q=a_t$ with $1/2$ probability and set $q$ randomly with probability $1/2$.   

We define the following communication game, and assume $N\ge 100k$. We have $k+1$ players 
$\{Q, P_1\cdots,P_k\}$. 
Player $Q$ gets a vector $x\in [k]^N$. Let $v\in [N]^k$ be a vector of $k$ distinct indices and $y\in[k]^k$. Each player $P_i$ gets $(v_i,y_i)$, and also a set of pairs $\{(v_j,y_j)~|~v_j > v_i\}$ and a prefix of $x$, i.e. $x_{1:v_i-1}$. $P_i$ needs to decide whether $x_{v_i}=y_i$. Further more, all the players have to answer correctly simultaneously. The communication is one-way, i.e., only player $Q$ sends a message to each of the other players. We use $\Indk$ to denote this communication problem, and we will show that the it has communication complexity $\Omega(k N\log k)$. 

\begin{lemma}
Let $\Pi$ be private coin randomized protocol for $\Indk$ with error probability at most $\delta\le 1/2000$ for any input, then the communication complexity of $\Pi$ is $\Omega(kN\log k)$. 
\end{lemma}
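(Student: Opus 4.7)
The plan is to prove this lower bound by reducing the (one-instance) Augmented Indexing problem $\Ind(k,N)$ to each of the $k$ player-pairs $(Q, P_i)$ separately and then summing. My starting point is the classical one-way randomized lower bound of $\Omega(N \log k)$ for $\Ind(k,N)$ under the distribution $\nu$, which holds for every error probability smaller than some absolute constant (a standard chain-rule / direct-sum argument on mutual information).

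Concretely, given a protocol $\Pi$ for $\Indk$ with worst-case error $\delta \le 1/2000$, let $M_i$ be the message $Q$ sends to $P_i$. For each $i \in [k]$ I would convert $\Pi$ into a one-way protocol $\Pi_i$ for $\Ind(k,N)$ that uses a single message of size $|M_i|$. On an input $(a, t, a_{1:t-1}, q) \sim \nu$, Alice plays the role of $Q$ with $x := a$, while Bob plays the role of $P_i$ with $(v_i, y_i) := (t, q)$ and prefix $x_{1:v_i-1} := a_{1:t-1}$. The one piece of $P_i$'s input that Bob is missing is the auxiliary set $\{(v_j, y_j) : v_j > v_i\}$; as long as $t \le N - k + 1$, which holds under $\nu$ with probability at least $1 - k/N \ge 99/100$ since $N \ge 100k$, Bob simply fabricates this set himself, e.g. by taking $v_j = t + j$ and setting $y_j$ to any fixed value. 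Alice then computes $M_i$ from $x$ and the shared randomness of the protocol and sends it to Bob, who runs $P_i$'s local algorithm and outputs its answer.

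Correctness of $\Pi_i$ is immediate from the worst-case guarantee of $\Pi$: on every fixed input to $\Indk$ we have $\Pr[P_i \text{ errs}] \le \Pr[\Pi \text{ errs}] \le \delta$, so $\Pi_i$ errs on $\nu$ with probability at most $\delta + k/N$, which is well within the regime covered by the $\Ind(k,N)$ lower bound. That lower bound then forces $|M_i| = \Omega(N \log k)$ for each $i$, and the total communication of $\Pi$ is at least $\sum_{i=1}^{k} |M_i| = \Omega(k N \log k)$, which is the desired bound.

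The main obstacle I expect is justifying the fabricated auxiliary pairs: in a naive reduction Bob would need to know $x_{v_j}$ for $v_j > v_i$ in order to sample $y_j$ from the ``correct'' $\nu$-marginal, which he does not. The resolution is precisely that $\Pi$ is guaranteed to be correct on \emph{every} input, so Bob is free to pick any $(v_j, y_j)$ he likes; this merely fixes an adversarial but perfectly legal $\Indk$ instance on which $P_i$ must still answer correctly except with probability $\delta$. The only remaining piece of bookkeeping is the negligible boundary event $t > N - k + 1$, absorbed via the assumption $N \ge 100 k$.
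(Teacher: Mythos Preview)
There is a genuine gap: the premise that one-way $\Ind(k,N)$ (with Bob outputting the single bit $\mathbf{1}[a_t=q]$) has communication $\Omega(N\log k)$ at \emph{constant} error is false. With public randomness Alice can send, for each $t\in[N]$, a fingerprint $h_t(a_t)\in\{0,1\}^{O(\log(1/\delta))}$ using shared pairwise-independent hashes; Bob just tests whether $h_t(q)=h_t(a_t)$. This uses $O(N\log(1/\delta))=O(N)$ bits with worst-case error $\delta$, and Newman's theorem converts it to a private-coin protocol with only $O(\log(N\log k))$ extra bits. So your per-player reduction can only yield $|M_i|=\Omega(N)$, and after summing you get $\Omega(kN)$, losing the crucial $\log k$ factor.

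The paper's proof is built precisely to recover that factor. The Fano-type step (Lemma~\ref{Fano}) needs a predictor for the full symbol $a_t\in[k]$, which in turn requires the equality test to succeed for \emph{all} $q\in[k]$ simultaneously, hence per-player error $O(1/k)$ rather than $O(1)$. The chain rule on $\Pr(E_1,\dots,E_k\mid V=v)=\prod_i\Pr(E_i\mid F_i,V=v)$ is exactly how the simultaneous-correctness requirement is cashed in to obtain, for a constant fraction of $i$'s, conditional error $\le 20\delta/k$. The additional message $s$ that Alice sends (the largest failing index) is needed so that Bob can condition on the event $F_i$; your reduction has no analogue of this. In short, fabricating the auxiliary pairs and invoking worst-case correctness gives you constant per-player error, but that is provably not enough for the $\log k$ in the bound.
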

\begin{proof}
We define the input distribution $\mu$ as follows. Pick $x$ uniformly randomly, and the distribution of $v$ is uniform conditioned on all entries in $v$ are distinct. Then for each $i$, with $1/2$ probability, set $y_i=x_{v_i}$ and with $1/2$ probability pick $y_i$ randomly. We will use capital letters to denote corresponding random variables. Let $\{M_1,\cdots, M_k\}$ be the set of messages $Q$ sends to each player respectively. Given the input is sampled from $\mu$, we will show that $H(M_i)=\Omega(N\log k)$ for at least a constant fraction of these messages. In the rest of the proof, the probability is over the random coins in $\Pi$ and the input distribution.

The proof follows the framework of \cite{molinaro2013:_beating}. In our communication problem, each player will know more information about $x$ than Bob in the Augmented-Indexing problem, which introduce more complication. 

Let $L_i=\{j|v_j>v_i\}$, and $E_i$ be the event that $P_i$ answer correctly. We define a set of events $F_i=\{E_j| j\in L_i\}$. Given $V=v$, we can apply the chain rule
$$\Pr(E_1,\cdots, E_k|V=v) = \Pi_i \Pr(E_i|F_i,V=v).$$

By our assumption, $\Pr(E_1,\cdots, E_k|V=v)\ge 1-\delta$. Using the bound $p\le e^{-(1-p)}$ (valid for all $p\in[0,1]$), we have 
$$\sum_i (\Pr(E_i|F_i,V=v)-1)\ge \ln(1-\delta)\ge -10\delta,$$
where the last inequality uses the first-order approximation of $\ln$ at $1$.
Multiplying both side of the above inequality by $\Pr(V=v)$ and then sum over all possible $v$, we have
$$\sum_i(\Pr(E_i|F_i)-1)\ge -10\delta.$$
By Markov's inequality, for at least half of the indices $i$ we have 
$$\Pr(E_i|F_i)\ge 1- 20\delta/k.$$
We call such indices $good$.

Next we give a reduction, using $\Pi$ to solve Augmented-Indexing problem. We hardwire a good index $i$, and call this protocol $\Pi_i$. In this protocol, Bob will simulate the behavior of $P_i$ and Alice simulate the rest of the players. Given an input of the Augmented-Indexing problem $a$ and $b=(t,a_{<t},q)$, which is sampled from $\nu$, Alice sets $x=a$, and then samples $k-1$ distinct indices $v_{-i}$ and corresponding $y_{-i}$ according to our input distribution $\mu$ (here we use $v_{-i}$ to denote the vector $v$ excluding the $i$th coordinate) and send $v_{-i}$ and $y_{-i}$ to Bob. Bob sets $v_i=t$, $y_i=q$ and $x_{<v_i}=a_{<t}$. Bob then checks whether there is some $j\neq i$ such that $v_j=t$, and if there is, Bob output 'abort'. Notice this only happens with probability $1/100$ since we assume $N\ge 100k$. It is easy to verify that, conditioned on this not happening, the input we constructed for $\Pi$ is exactly the same as $\mu$.

Alice then runs $\Pi$, simulating player $Q$, and sends $M_i$ to Bob. Also Alice computes the answer of $P_j$ for $j\neq i$ based on the message $M_j$. Note that Alice will get the same answer as $P_j$ for $j\in L_i$, since Alice has the entire input of such $P_j$. Let $o_j$ be the answer of $P_j$. Alice then check whether each of the answers is correct, and finds the largest $v_j$ such that $o_j$ is not correct and send $s=v_j$ to Bob (if there exists one, and otherwise set $s=0$). 

Bob check whether $s>t$, and if so, output 'abort', and this happens with probability at most $\delta$.  Bob then runs $\Pi$ simulating player $P_i$, and outputs whatever $P_i$ outputs, because Bob has $v$ and $y$. 

Notice that $o_j$ is correct for all $j\in L_i$ if and only if $s<t$, and in this case all the events in $F_i$ happen. So the probability that the above protocol outputs 'abort' is at most $1/50$. Conditioned on 'abort' does not happen, we have $s< t$, which implies that all events in $F_i$ happen, and the success probability of $\Pi_i$ is at least $\Pr(E_i|F_i)\ge1-20\delta/k$, since $i$ is good.

We next analyze the information cost of the protocol $\Pi_i$ for a good $i$.
By definition
$$I(M_i,S;X|V,Y)=H(X|V,Y)-H(X|M_i,V,Y,S).$$

It is easy to see $H(X|V,Y)\ge \frac{99N}{100}\log k$, since we assume $N\ge 100k$. So we only need to upper bound $H(X|M_i,V,Y,S)$. In $\Pi_i$, the message send by Alice is $M_i,V_{-i},Y_{-i},S$, then Bob 'abort' with probability at most $1/50$, and condition on not 'abort', Bob outputs a correct answer with probability at least $1-1/100k$. With such properties, we have the following lemma, which is shown in \cite{molinaro2013:_beating}.
\begin{lemma}\label{Fano}
$H(X|M_i,V_{-i},Y_{-i},S)\le\frac{1}{20}N\log k$. 
\end{lemma}
For completeness, we provide a proof of the above lemma in Appendix \ref{app:fano}. By property of entropy, $H(X|M_i,V,Y,S)\le H(X|M_i,V_{-1},Y_{-i},S)$, so we have

$$I(M_i,S;X|V,Y)=\Omega(N\log k).$$
Because $S$ only takes $\log N$ bits, we get
$$H(M_i)\ge H(M_i,S)-\log N\ge I(M_i,S;X|V,Y)-\log N=\Omega(N\log k).$$
We have shown that there are at least $k/2$ good $i$, so we prove that the communication complexity of $\Pi$ is $\Omega(kN\log k)$.
\end{proof}

\section{Lower bound of tracking $F_p$ in the cash register model}
We will give a reduction from $\Indk$. For convenience, we change the definition of $\Indk$ slightly. Here each player $P_j$ gets $(v_j,y_j)$, and also a set of pairs $\{(v_{\ell},y_{\ell})~|~v_{\ell} < v_j\}$ and a suffix of $x$, i.e. $x_{v_j+1:N}$. $P_j$ needs to decide whether $x_{v_j}=y_j$.
Clear this new problem is equivalent to $\Indk$.  
For $0<p\le 2$ and $p\neq 1$, we define $t=\frac{ 2^{2p}}{|2^{p-1}-1|}$ and $q=t^{1/p}$. We have the following lower bound for tracking $F_p$.
\begin{theorem}
For any linear sketch based algorithm which can track $F_p$ continuously within accuracy
$(1\pm \frac{2^p-2}{2^{p+3}})$ in the cash register model, the space used is at least $\Omega(\log m\log\log m/\log q)$ bits.
\end{theorem}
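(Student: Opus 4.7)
The plan is to reduce $\Indk$ to the $F_p$ tracking problem and pay for the reduction by a single sketch snapshot per player. Set $N=\Theta(\log m/\log q)$ and $k=\Theta(N)$ so that $\log k=\Theta(\log\log m)$ while the condition $N\ge 100 k$ is preserved. The previous lemma then says any $\Indk$ protocol with error at most $1/2000$ uses $\Omega(kN\log k)$ bits of total communication. If we can arrange a reduction in which player $Q$ sends the same sketch snapshot of size $S$ as his message to each of the $k$ other players, the total communication is $kS$ and we conclude $S=\Omega(N\log k)=\Omega(\log m\log\log m/\log q)$, exactly matching the claimed lower bound.

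For the encoding, introduce a universe element $e_{i,c}$ for each $(i,c)\in[N]\times[k]$. Player $Q$'s ``phase stream'' processes positions $i=1,2,\ldots,N$ in order, inserting $q^i$ copies of $e_{i,x_i}$ at phase $i$. A conceptual global stream $S^{*}$ interleaves the players' query items $q^{v_j}$ copies of $e_{v_j,y_j}$ right after the matching phase (queries ordered by increasing $v_\ell$ so that $S^{*}$ is a legitimate cash-register stream). At the time point in $S^{*}$ immediately after $P_j$'s query, the probe at position $v_j$ contributes $2^p q^{v_j p}$ to $F_p$ when $x_{v_j}=y_j$ and $2 q^{v_j p}$ when $x_{v_j}\ne y_j$, producing a gap of $(2^p-2)q^{v_j p}$. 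Every other contribution --- phases at positions $i<v_j$ and earlier queries at positions $v_\ell<v_j$ --- is absorbed into the geometric tail $\sum_{i<v_j} q^{ip}=O(q^{v_j p}/t)$ with ratio $q^p=t$. Hence the total $F_p$ at this time is $\Theta(2^p q^{v_j p})$ and the relative gap is $\Theta((2^p-2)/2^p)$, more than twice the claimed tracking accuracy $(2^p-2)/2^{p+3}$, so $P_j$ can decode $x_{v_j}$ versus $y_j$ from any sufficiently accurate $F_p$ estimate at that time.

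The reduction itself uses linearity. $Q$ actually runs the tracking algorithm only on the phase stream (which he can build alone, since it depends only on $x$) and obtains the sketch $A f_N^{\mathrm{ph}}$, which he sends verbatim to each $P_j$. Using her suffix $x_{>v_j}$ and the sketching matrix $A$, $P_j$ subtracts $A\sum_{i>v_j} q^i e_{i,x_i}$; using the prior pairs $\{(v_\ell,y_\ell):v_\ell<v_j\}$ that she already holds together with her own $(v_j,y_j)$, she adds $A\sum_{\ell:v_\ell\le v_j} q^{v_\ell} e_{v_\ell,y_\ell}$. The resulting vector is exactly $Af^{(j)}$, the linear sketch of the frequency vector at $P_j$'s query time in $S^{*}$. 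She then applies the decoder. Since $S^{*}$ is a perfectly valid cash-register stream, the tracking guarantee, applied to $S^{*}$ and boosted by a constant factor so that the failure probability drops below $1/2000$, says that with probability at least $1-1/2000$ over the algorithm's randomness the decoder is simultaneously correct at every time point of $S^{*}$, including all $k$ probes $f^{(j)}$.

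The main obstacle is the mismatch between the algorithm's actual input and the stream $S^{*}$: $Q$ cannot feed $S^{*}$ to the algorithm because he knows neither the $v_\ell$'s nor the $y_\ell$'s. The linear-sketch trick is exactly what circumvents this --- $Af^{(j)}$ is a linear image of a fixed frequency vector, and the tracking guarantee for $S^{*}$ is a probabilistic statement about the decoder applied to linear images of $S^{*}$'s time-point frequency vectors, regardless of whether those frequency vectors were ever explicitly presented to the algorithm. The other delicate point is the signal-to-noise accounting: it is essential that the weights grow geometrically with ratio $q^p=t$ large enough that $1/t\ll 2^p-2$, which is precisely the content of the definition $q=(2^{2p}/|2^{p-1}-1|)^{1/p}$ made at the top of the section. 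Plugging $N=\log m/\log q$ and $k=\Theta(N)$ into the total-communication bound $kS\ge \Omega(kN\log k)$ from the previous lemma then yields the claimed $\Omega(\log m\log\log m/\log q)$ space lower bound.
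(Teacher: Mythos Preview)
Your proposal is correct and follows essentially the same approach as the paper: reduce from $\Indk$ using geometrically weighted items $q^i$, have $Q$ send a single sketch of the phase stream, let each $P_j$ use linearity to subtract the suffix (known from $x_{>v_j}$) and add in the earlier query items $\{(v_\ell,y_\ell):v_\ell\le v_j\}$, argue that all resulting frequency vectors appear along one legitimate cash-register stream $S^*$, and invoke the tracking guarantee on $S^*$ to get simultaneous correctness. The parameter choice $N=\Theta(\log m/\log q)$, $k=\Theta(N)$ and the signal-to-noise calculation with ratio $q^p=t$ match the paper's as well.
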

\begin{proof}
Given an linear sketch algorithm which can track $F_2$ within error $(1\pm\eps)$ of an incremental stream at all time with probability $1-\delta$, we show how to use this algorithm to solve $\Indk$ defined above. We use $L$ to denote the algorithm, and $L(f)$ to denote the memory of the algorithm when the input frequency vector is $f$. For linear sketch, the current state of algorithm does not depend on the order of the stream, and only depends on the current frequency vector. Let $O(L(f))$ denote the output of the algorithm, when the current memory state is $L(f)$.

Given $x$, $Q$ runs the following reduction, which use similar ideas as in \cite{kane10:_exact}. Each item in the stream is a pair $(i,x_i)$. For $i\in[N]$, $Q$ insert $\lfloor q^{i}\rfloor$ items $(i,x_i)$. We use $f(x)$ to denote the frequency vector of this stream, and we can also view $f:[k]^N\rightarrow \integer^{Nk}$ as a linear transformation. Then $Q$ runs the streaming algorithm to process the stream just constructed, and sends the memory content $L(f(x))$ to each of the other players. 

For each $j$, $P_j$ first computes $L(f(x_{\le v_j}))=L(f(x-x_{>v_j}))$. Since the $L$ and $f$ are linear, we can do this. Then $P_j$ inserts $\lfloor q^{v_{\ell}}\rfloor$ copies of $(v_{\ell},y_{\ell})$ for all $\ell$ such that $v_{\ell}< v_j$. $P_j$ can do this because he knows all $(v_{\ell},y_{\ell})$ with $v_{\ell}<v_j$. Now the frequency vector is 
$$u=f(x_{\le v_j}+\sum_{\ell:v_{\ell}<v_j}y_{\ell}\cdot e_{v_{\ell}}),$$
where we use $e_j$ to denote the $j$th standard basis vector. Note that $\lfloor q^i\rfloor ^p = t^i(\lfloor q^i\rfloor/q^i)^p$, and thus $t^i/2^p\le \lfloor q^i\rfloor^p\le t^i$.
 Let $F_p(f)$ be the $p$th moment of $f$. We first consider the case when $p>1$.
 We have
 $$F_p(f(x_{<v_j}))+\lfloor q^{v_j}\rfloor^p\le F_p(u) \le 2^p\cdot F_p(f(x_{<v_j}))+\lfloor q^{v_j}\rfloor^p .$$ 
Then $P_j$ inserts 
$\lfloor q^{v_{j}}\rfloor$ copies of $(v_j,y_j)$, making the frequency vector 
$$f(x_{\le v_j}+\sum_{\ell:v_{\ell}\le v_j}y_{\ell}\cdot e_{v_{\ell}}).$$ 
We use $f_j$ to denote this vector.

When $y_j=x_{v_j}$, then $F_p(f_j)$ is at least $F_p(f(x_{<v_j}))+2^p\lfloor q^{v_j}\rfloor^p$.
On the other hand, if $y_j\neq x_{v_j}$, then $F_p(f_j)$ is at most $2^p\cdot F_p(f(x_{<v_j}))+2 \lfloor q^{v_j}\rfloor^p$. The difference between these two is at least 
$$(2^p-2)\cdot \lfloor q^{v_j}\rfloor^p-2^p\cdot F_p(f(x_{<v_j}))\ge \frac{2^p-2}{2^p}\cdot t^{v_j} -2^p\cdot \frac{t^{v_j}}{t-1},$$ 
which is at least $\frac{2^p-2}{2^{p+3}}$ fraction of the current $F_p$ by our setting of $t$. So if the algorithm can estimate $F_p$ of $f_j$ within accuracy $(1\pm \frac{2^p-2}{2^{p+3}})$, then $P_j$ can distinguish these two cases.

Now we need to argue that all the players can output a correct answer simultaneously. More precisely, we need $O(L(f_j))$ to be correct simultaneously for all $j$. In order to prove this, we construct an incremental stream such that all $f_j$ will appear at sometime during the stream.

Let us consider the following stream $S$. Let $\P$ be a permutation such that $\P(v)$ is the sorted and let $v'=\P(v)$ and  $y'=\P(y)$. 
The stream has $k$ phases. In phase $j$ for $1\le j\le k$, we inserts $\lfloor q^{i}\rfloor$ copies of $(i,x_i)$ for all $v'_{j-1}<i\le v'_{j}$, and inserts $\lfloor q^{v'_{j}}\rfloor$ copies of $(v'_{j},y'_{j})$. (Here we set $v'_{0}=0$). So the number of items inserted in this stream is at most $2 \cdot q^{N}$. Let $S_{\tau}$ be the stream after $\tau$th phase ends and $s_\tau$ be the corresponding frequency vector of $S_\tau$.

We can verify that, for every $j$, $f_j=s_\tau$, where $\tau$ is the rank of $v_j$ in $v$. Since the algorithm is linear sketch, the output only depends on the frequency vector of the current stream. So the output of $P_j$ is correct, as long as the output of the streaming algorithm is correct at time $\tau$. However by our assumption, the streaming algorithm will succeed at all time, since $S$ is incremental. So
all the players will give correct answers simultaneously with probability $(1-\delta)$, which solves the communication game.

Now suppose the streaming algorithm use space $T$, then the communication cost of the above protocol is $kT$. So $T=\Omega(N\log k)$. The length of $S$ is at most $2\cdot q^N$. Given $m$, we set $N=\log_q \frac{m}{4}/2$  and $k=\log_q \frac{m}{4}/200$, so the length of $S$ is at most $m$, and $T=\Omega(\log m\log\log m/\log q)$ for fixed $1<p\le 2$. The case when $0<p<1$ is similar. 

\end{proof}

\section{Lower bound for tracking $F_p$ in turnstile model}
In the turnstile model, negative updates are allowed in the stream, so the communication game we reduce from is simpler. The problem we will use is the $k$-fold version of $\Ind(k,N)$. More precisely, Alice has $\{a^1,\cdots,a^k\}$ and Bob has $\{b^1,\cdots,b^k\}$, where each pair $(a^i,b^i)$ is an input for $\Ind(k,N)$ and $b^i=(t_i, a^i_{>t_i},q_i)$, and they want to compute a $k$-bit vector $o$, such that $o_i=\ind(a^i,b^i)$ for all $i\in[k]$. We call this problem $\Ind^k(k,N)$, and we have the following results from \cite{molinaro2013:_beating}.
\begin{theorem}
For any integer $k$ and $N$, the communication complexity of solving $\Ind^k(k,N)$ with constant probability is $\Omega(kN\log k)$.
\end{theorem}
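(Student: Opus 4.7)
The plan is to prove the bound by combining an information-theoretic direct-sum argument with the classical $\Omega(N\log k)$ lower bound for single-copy Augmented Indexing $\Ind(k,N)$. Under the product distribution $\nu^{k}$ (each pair $(a^i,b^i)$ drawn independently from $\nu$), I would lower-bound the internal information cost $I(A^1,\ldots,A^k;\Pi\mid B^1,\ldots,B^k)$ of any protocol $\Pi$ that solves $\Ind^k(k,N)$ with small constant error. Since the instances are independent under $\nu^{k}$, the standard direct-sum theorem for information complexity yields that this quantity is at least $k$ times the information cost of a single-copy protocol $\Pi'$, obtained by embedding one real instance into a random coordinate and filling the remaining $k-1$ coordinates using public randomness. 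It therefore suffices to show that any constant-error protocol for $\Ind(k,N)$ has information cost $\Omega(N\log k)$ under $\nu$.

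For the single-copy bound I would use the standard Fano-style decoding argument for Augmented Indexing. Let $M$ denote Alice's message. The augmented structure, namely that Bob at coordinate $t$ already possesses the prefix $a_{<t}$, means that any protocol succeeding with probability $1-\delta$ can be converted into an online decoder: trying each candidate $q\in[k]$ recovers $a_t$ from $M$ and $a_{<t}$. Applying the chain rule $H(A\mid M)=\sum_{t=1}^{N}H(A_t\mid A_{<t},M)$ together with a per-coordinate Fano inequality yields $H(A\mid M)\leq N\bigl(H(\delta')+\delta'\log k\bigr)$ for a small effective error $\delta'$ obtained by averaging over $t$. Combined with $H(A)=N\log k$ under the uniform marginal on $a$, this gives $I(A;M)=\Omega(N\log k)$, hence a communication lower bound of the same order. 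Chaining with the direct-sum step yields the claimed $\Omega(kN\log k)$ bound for $\Ind^k(k,N)$.

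The main obstacle will be two technical points. First, justifying the direct-sum step cleanly in the public-coin two-party setting, so that the embedded single-copy protocol indeed attains information cost at most a $1/k$ fraction of the original; this requires the standard compression-free passage from $\mathrm{IC}_{\nu^k}$ to $k\cdot\mathrm{IC}_{\nu}$ via a hybrid argument. Second, handling the fact that the $N$ decoder queries in the single-copy Fano argument all reuse the same message $M$ and are therefore correlated. For the latter one cannot naively union-bound per-coordinate errors; instead one bounds the per-coordinate error in expectation over a uniformly random $t$ (which is exactly how $\nu$ samples $t$) and then sums the coordinate-wise Fano inequalities. Both points are routine in the information-complexity literature, and together they give the bound stated in \cite{molinaro2013:_beating}.
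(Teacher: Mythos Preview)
There is a genuine gap in your plan, and it is exactly the point of the cited result. A direct-sum argument preserves the error parameter: embedding one real instance among $k-1$ dummy ones produces a single-copy protocol whose error is the same constant $\delta$ as the $k$-fold protocol, not $\delta/k$. Your Fano decoding step, however, silently needs error $O(1/k)$: to recover $a_t$ you must try all $k$ candidate values of $q$ against the same message, and a constant per-query error gives no guarantee after a union bound over $k$ candidates. With only constant error the decoder learns $O(1)$ bits about each $a_t$, so the chain rule yields at most $I(A;M)=\Omega(N)$, not $\Omega(N\log k)$.

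This is not merely a loose analysis. The single-copy decision version $\Ind(k,N)$ genuinely has constant-error one-way complexity $O(N)$: Alice sends an $O(1)$-bit random fingerprint of each coordinate $a_i$ (total $O(N)$ bits), and Bob compares the fingerprint of $q$ with that of $a_t$. Hence no constant-error single-copy lower bound of $\Omega(N\log k)$ exists, and direct sum can only yield $\Omega(kN)$ for $\Ind^k(k,N)$, missing the $\log k$ factor. The extra $\log k$ comes precisely from the \emph{simultaneous} correctness requirement: since $\prod_i \Pr(E_i\mid F_i)\ge 1-\delta$, a chain-rule/Markov argument shows that for at least half of the indices $i$ one has $\Pr(E_i\mid F_i)\ge 1-O(\delta/k)$, and it is this $O(1/k)$ conditional error that makes the Fano argument go through. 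This is exactly the mechanism in \cite{molinaro2013:_beating} (and the paper reproduces it in detail when proving the analogous lemma for $\Indk$). Your two ``obstacles'' are real but secondary; the missing ingredient is the direct-\emph{product}-style error amplification, without which the approach stalls at $\Omega(kN)$.
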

\begin{theorem}
For any linear sketch based algorithm which can track $F_p$ continuously within accuracy
$(1\pm 0.5)$ in the turnstile model, the space used is at least $\Omega(p\log^2 m)$ bits.
\end{theorem}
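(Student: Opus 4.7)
The plan is to reduce from $\Ind^k(k,N)$, whose $\Omega(kN\log k)$ communication complexity was established just above, to the turnstile $F_p$ tracking problem, using the tracking guarantee in an essential way: a single linear sketch should be forced to be simultaneously accurate on $k$ different frequency vectors that Bob constructs on the fly by extending Alice's stream with turnstile updates. It is this ``$k$ correct estimates from one sketch'' phenomenon that should turn the $\Omega(N\log k)$ lower bound for a single Augmented-Indexing instance into the $\Omega(kN\log k)$ bound for the whole reduction.

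Concretely, Alice, who holds $(a^1,\dots,a^k)\in([k]^N)^k$, constructs a stream over the universe $[k]\times[N]\times[k]$ by inserting $\lfloor q^j\rfloor$ copies of $(i,j,a^i_j)$ for every $(i,j)\in[k]\times[N]$, where $q=q(p)>1$ is the same geometric scaling parameter appearing in the cash-register bound (chosen so that $q^p-1$ is a sufficiently large constant relative to $2^p-2$). She runs the tracking linear sketch $L$ on this stream and sends its $T$-bit state to Bob. Bob, who knows $(t_i,a^i_{>t_i},q_i)$ for every $i$, first cancels every suffix he knows: for each $i$ and each $j>t_i$ he subtracts $\lfloor q^j\rfloor$ copies of $(i,j,a^i_j)$, which is free because the sketch is linear and we are in the turnstile model. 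He then processes the $k$ instances one at a time in a carefully chosen order (e.g., by decreasing $t_i$). For the $i$-th instance he inserts $\lfloor q^{t_i}\rfloor$ copies of $(i,t_i,q_i)$, reads the current tracked estimate of $F_p$, decides $a^i_{t_i}$ from the $(2^p-2)\lfloor q^{t_i}\rfloor^p$ gap between the two cases (exactly the gap computed in the cash-register proof), and finally undoes the insertion before moving on. A union bound over the $k$ query moments, together with the tracking guarantee at every time step, makes all $k$ decisions simultaneously correct with constant probability, so Alice's $T$-bit message solves $\Ind^k(k,N)$.

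The constructed stream has length $O(kq^N)$, so taking $\log q=\Theta(1/p)$, $N=\Theta(p\log(m/k))$ and $k=\Theta(\log m)$ keeps the length below $m$ while giving $kN\log k=\Omega(p\log^2 m)$, and since the one-way communication in the reduction equals the sketch size, we conclude $T=\Omega(p\log^2 m)$. The main technical obstacle is the signal-to-noise step inside Bob's decoding: even after his suffix cancellations, the sketch still carries the prefix encoding $a^{i'}_{\le t_{i'}}$ of every instance $i'\ne i$, so for the $(2^p-2)\lfloor q^{t_i}\rfloor^p$ fluctuation at query $i$ to exceed the accuracy budget of the $(1\pm 1/2)$-approximate tracked estimate, one must both choose $q$ so that each geometric tail $\sum_{j\le t}q^{pj}$ is dominated by its last term and order Bob's operations so that neither the already-processed nor the yet-to-be-processed instances drown out the current update. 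Balancing this signal-to-noise budget against the stream-length constraint $kq^N\le m$ is what pins down the three parameters $k$, $N$ and $q$ in terms of $p$ and $m$ and delivers the $\Omega(p\log^2 m)$ bound; it is also the main place where the argument is more delicate than the cash-register one.
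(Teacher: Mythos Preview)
Your reduction has a genuine gap at precisely the point you flag as the ``main technical obstacle.'' After Bob cancels the known suffixes, the frequency vector still encodes $a^{i'}_{\le t_{i'}}$ for every $i'$, and Bob has no way to remove these prefixes since he does not know them. The residual $F_p$ is therefore at least $\lfloor q^{t_{i^*}}\rfloor^p$ for whichever instance $i^*$ has the largest $t_{i^*}$; under the hard distribution for $\Ind^k(k,N)$ this is typically $\Theta(q^{pN})$. But when Bob queries an instance $i$ with small $t_i$, his signal is only $\Theta(q^{pt_i})$, which can be as small as $\Theta(q^p)$. No ordering of Bob's queries fixes this: processing by decreasing $t_i$ does not help, because even after you learn $a^{i^*}_{t_{i^*}}$ you still cannot remove the prefix $a^{i^*}_{<t_{i^*}}$, whose $F_p$ contribution already swamps every later signal. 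Forcing the signal to dominate for every instance would require $q^{pN}=\poly(k)$, i.e.\ $N=O(\log k)$, which collapses the bound you are aiming for.

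The paper sidesteps this entirely by having Alice send $k$ separate sketches $L(f(a^1)),\dots,L(f(a^k))$, one per instance, so Bob's manipulation for instance $i$ involves only $a^i$ and carries no noise from the other instances. The tracking guarantee is then used only to argue that these $k$ sketches (which share the same randomness) are simultaneously correct: one exhibits a single turnstile stream that, in phase $i$, builds up to Bob's target vector for instance $i$ and then reverses every update back to the zero vector before phase $i+1$ begins. This ``reset to zero between phases'' is exactly what the turnstile model buys over cash register. The communication is now $kT$ rather than $T$, so the $\Omega(kN\log k)$ bound yields only $T=\Omega(N\log k)$; taking $q=2^{1/p}$, $N=\Theta(p\log m)$, and $k$ polynomial in $m$ gives $T=\Omega(p\log^2 m)$.
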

\begin{proof}
For $0<p\le 2$, we define $q=2^{1/p}$.
Let $f:[k]^N\rightarrow \integer^{Nk}$ be the same linear function defined as above. We next give a reduction from $\Ind^k(k,N)$ to tracking $F_p$ in the turnstile model. Let $L$ be a linear sketch. Alice sends $L(f(a^i))$ for $i\in[k]$ to Bob. For each $i$, Bob compute a sketch $\Gamma_i=L(f(a^i-a^i_{>t_i}-q_ie_{t_i}))$, where $e_j$ is the $j$th vector in the standard basis. It is easy to verify that $2^{t_i}/2^p \le F_p(f(a^i_{<t_i}))\le 2^{t_i}$. When $q_i=a^i_{t_i}$, we have $f(a^i-a^i_{>t_i}-q_ie_{t_i})=f(a^i_{<t_i})$, and $F_p(f(a^i_{<t_i}))$ is at most $2^{t_i}$. On the other hand, if $q_i\neq a^i_{t_i}$, then $F_p(f(a^i-a^i_{>t_i}-q_ie_{t_i}))\ge 2\cdot 2^{t_i}$. So the if the output $O(\Gamma_i)$ is within accuracy $(1\pm 0.5)F_p$, then Bob can distinguish these two cases, and output $\ind(a^i,b_i)$ correctly. 

Now we need to prove that Bob can solve $k$ instances simultaneously. As in the proof in cash register model, we construct a imaginary stream such that the frequency vector each $\Gamma_i$ sketched appear in the stream at sometime, but now we can use negative updates in the stream.

The stream has $k$ phases, and the $i$th phase corresponds to $(a^i,b^i)$. In the $i$th phase, we first inserts sets of positive updates, so that at the end the frequency vector is $f(a^i_{\le t_i})$, then inserts $\lfloor q^{t_i} \rfloor$ copies of $((t_i,q_i),-1)$. The last step in this phase is to reverse all the above updates, so that the frequency vector becomes $0$. Clearly, for each $i$, the frequency vector before the cleaning step in phase $i$ is exactly $f(a^i-a^i_{>t_i}-q_ie_{t_i})$, which is the vector sketched by $\Gamma_i$. We call the above stream $S$. By our assumption, the algorithm $L$ is correct at any time during the stream $S$, so Bob solves $\Ind^k(k,N)$. The communication cost is $kT$, where $T$ is the amount of space used by $L$, and thus $T=\Omega(N\log k)$. The number of updates in $S$ is at most $k 2^{N/p}$. Given $m$, we set $k=\sqrt{m}$, and $N=p\log \frac{m}{2}$, so that the number of update is bounded by $m$, and we have $T=\Omega(p\log^2 m)$. 

\end{proof}

\bibliography{paper}
\bibliographystyle{abbrv}

\appendix 
\section{Proof of Lemma \ref{Fano} }\label{app:fano}
\begin{proof}
We focus on tuples $(t,a,r)$, where $r$ is the private coins used by Alice and Bob in $\Pi_i$, including the randomness used in $\Pi$ and $v_{-i},y_{-i}$ which are sampled by Alice. Let $$U_1=\{(t,a,r)~:~\Pi_i(a,t,a_t,r)='abort'\}.$$
Here we use $\Pi_i(a,t,a_t,r)$ to denote the output of $\Pi_i$ with input $a,t,a_t$  (yes instance) and random coins $r$.
We use $f(a,t,q)$ to denote the corresponding function of Augmented-indexing problem.
We define 
$$U_2=\{(t,a,r):\exists q \text { st }\Pi_i(a,t,q,r)\neq f(a,t,q) \wedge \Pi_i(a,t,q,r) \neq 'abort'\}.$$
We say a tuple $good$ if it does not belong to either $U_1$ or $U_2$. Notice that if 
$(t,a,r)$ is good, then: (1) $\Pi_i(a,t,a_t,r)=1$; (2) for every $q\neq a_t$, $\Pi_i(a,t,q,r)\neq 1$.
\begin{lemma}
For every index $t\in{N}$, there is a predictor $g_t$ such that 
$$\Pr(g_t(M_i(A),A_{<t}, V_{-i},Y_{-i}, S )=A_t)\ge \Pr((t,A,R)\text{ is good}).$$
\end{lemma}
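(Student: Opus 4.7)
The plan is to build $g_t$ by simulating Bob's role in $\Pi_i$ once for every possible query value $q \in [k]$, and reading off the unique $q$ that Bob ``accepts.'' Concretely, on input $(M_i, A_{<t}, V_{-i}, Y_{-i}, S)$ together with access to the shared randomness $R$ used by $\Pi_i$, the predictor performs, for each candidate $q\in[k]$, the following simulation: it pretends Bob's input to $\Pi_i$ is $(v_i=t,\, y_i=q,\, x_{<v_i}=A_{<t})$ along with the side information $(V_{-i}, Y_{-i})$, feeds in Alice's message $M_i$ and the aux message $S$, and runs Bob's portion of $\Pi_i$ using randomness $R$ to obtain an output $o_q \in \{0, 1, \mathrm{abort}\}$. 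If there is a unique $q$ with $o_q = 1$, output that $q$; otherwise output an arbitrary default symbol.

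Correctness of this predictor is essentially a bookkeeping consequence of the way $U_1$ and $U_2$ were defined. Fix any $(t,A,R)$ that is good, i.e. not in $U_1 \cup U_2$. Since $(t,A,R)\notin U_1$, the simulation with $q = A_t$ does not abort and outputs $1$, so $o_{A_t}=1$. Since $(t,A,R)\notin U_2$, for every $q\neq A_t$ the simulation either aborts or outputs something other than $1$, i.e.\ $o_q\neq 1$. Hence the unique ``accepting'' candidate is exactly $q = A_t$, so $g_t$ outputs $A_t$. Taking probabilities over $(A, V_{-i}, Y_{-i}, R)$ immediately gives
\[
\Pr\bigl(g_t(M_i(A), A_{<t}, V_{-i}, Y_{-i}, S) = A_t\bigr) \;\ge\; \Pr\bigl((t,A,R)\text{ is good}\bigr).
\]

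The only thing to double-check is that Bob's computation inside $\Pi_i$ really uses nothing beyond the quantities supplied to $g_t$ (plus shared randomness): by the construction of $\Pi_i$, Bob's view consists of $M_i$, his assigned pair $(v_i,y_i)$, the prefix $x_{<v_i}$, the set $\{(v_j,y_j)\}_{j\neq i}$, the message $S$ from Alice, and $R$. Substituting $v_i=t$ and varying $y_i = q$, all of these are derivable from $(M_i, A_{<t}, V_{-i}, Y_{-i}, S, R)$, so the simulation is well-defined. I do not expect a real obstacle here; the lemma is essentially a definitional consequence of the ``unique accepting query'' property that $U_1$ and $U_2$ were tailored to guarantee. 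If one wants a deterministic $g_t$, one can defer the randomness $R$ to the probability space (i.e.\ treat $g_t$ as randomized with the protocol's own coins), which is already the convention used throughout the surrounding argument.
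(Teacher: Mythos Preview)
Your approach is the same as the paper's: simulate Bob's side of $\Pi_i$ for every candidate $q\in[k]$ and output the (necessarily unique, on good tuples) $q$ that yields output $1$. The paper defines $g'_t(M_i(a),a_{<t},v_{-i},y_{-i},s,r_B)$ to be any $q$ with $\Pi_i(a,t,q,r)=1$, which is exactly your construction.

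One point deserves tightening. You give the predictor ``access to the shared randomness $R$,'' but $\Pi$ is explicitly a \emph{private-coin} protocol, so there is no shared randomness; what the simulation actually needs is Bob's private coins $R_B$. Your closing remark (``defer the randomness to the probability space'') points in the right direction but does not finish the job: the lemma asks for a predictor whose argument list is exactly $(M_i(A),A_{<t},V_{-i},Y_{-i},S)$, with no extra coin input. The paper handles this by an averaging step: since $\Pr((t,A,R)\text{ good})=\sum_{r_B}\Pr(R_B=r_B)\,\Pr((t,A,R)\text{ good}\mid R_B=r_B)$, there exists a particular $r_b$ achieving at least the average, and one then sets $g_t(\cdot)=g'_t(\cdot,r_b)$. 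Adding this one-line fix makes your argument complete and identical to the paper's.
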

\begin{proof}
We set $g'_t(M_i(a),a_{<t}, v_{-i},y_{-i}, s,r_B )$ to any value $q$ such that $\Pi_i(a,t,q,r)=1$, where $r_B$ is the random coins used by Bob. (if no such $q$ exists, set arbitrarily). By the above argument, if $(t,a,r)$ is good, $g'_t(M_i(a),a_{<t}, v_{-i},y_{-i}, s, r_B )=a_t$, which shows that $\Pr(g'_t(M_i(A),A_{<t}, V_{-i},Y_{-i}, S ,R_B)=A_t)\ge \Pr((t,A,R)\text{ is good})$. By definition, we have
\begin{eqnarray*}
&&\sum_{r_B}\Pr(R_B=r_B)\Pr\left((T,A,R)\text{ is good }|R_B=r_b\right)\\
&=&\Pr((T,A,R)\text{ is good }),
\end{eqnarray*}
so there is an $r_b$, such that $\Pr\left((T,A,R)\text{ is good }|R_B=r_b\right)\ge
\Pr((T,A,R)\text{ is good })$.
We then set $g_t(M_i(A),A_{<t}, V_{-i},Y_{-i}, S )=g'_t(M_i(A),A_{<t}, V_{-i},Y_{-i}, S ,r_b)$
 which proves the lemma.
\end{proof}

Then by Fano's inequality, we have that
$$H(A_t|M_i(A),A_{<t}, V_{-i},Y_{-i}, S) \le 1+\log k\cdot(\Pr((t,A,V_{-i},Y_{-i})\text{ is not good})).$$

By definition
\begin{eqnarray*}
H(A|M_i(A),V_{-i},V_{-i},S)&=&\sum_{t=1}^N H(A_t|M_i(A),A_{<t}, V_{-i},Y_{-i}, s)\\
&\le& N+\log k\sum_{t=1}^N\Pr((t,A,V_{-i},Y_{-i})\text{ is not good})
\end{eqnarray*}

\begin{lemma}
$Pr((T,A,R)\text{ is not good})\le 1/20$.
\end{lemma}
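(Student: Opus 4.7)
The plan is to use a union bound, $\Pr((T,A,R)\text{ is not good}) \le \Pr(U_1) + \Pr(U_2)$, and show each summand is at most $1/50$, giving a total at most $1/25<1/20$. The distribution on $(T,A,R)$ is $T$ uniform on $[N]$, $A$ uniform on $[k]^N$, and $R$ the joint random coins of $\Pi_i$ (including Alice's sampling of $V_{-i},Y_{-i}$); with $Y_i=A_T$, the triple matches a ``yes''-instance of Augmented-Indexing drawn from $\nu$.

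For $U_1$ --- the event that $\Pi_i$ aborts on the yes-instance $q=A_T$ --- I would appeal to the two disjoint abort causes identified in the construction of $\Pi_i$: (a) a collision $V_j=T$ for some $j\ne i$, of probability at most $1/100$ by $N\ge 100k$; and (b) the event $s>T$, which requires some $P_j$ with $V_j>T$ to err, of probability at most $\delta\le 1/2000$. Crucially, both events are determined by data that does not involve $Y_i$, since the players $P_j$ with $V_j>V_i$ never see the pair $(V_i,Y_i)$; hence the $1/50$ bound on the total abort probability from the reduction transfers pointwise to the case $q=A_T$, giving $\Pr(U_1)\le 1/100+\delta\le 1/50$.

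For $U_2$, I would first apply a union bound over the $k$ values of $q$. Writing $\mathrm{err}(a,t,q,r)$ for the indicator that $\Pi_i(a,t,q,r)$ errs and does not abort, $\Pr(U_2)\le \E_{A,T,R}\bigl[\sum_{q\in[k]}\mathrm{err}(A,T,q,R)\bigr]$. From ``no-abort implies $F_i$'' together with the earlier bound $\Pr(E_i\mid F_i)\ge 1-20\delta/k$ for a good index $i$, we get $\Pr_\nu[\mathrm{err}]\le 20\delta/k$. On the other hand, unfolding the definition of $\nu$ shows that the no-mode contributes mass exactly $1/(2k)$ to every specific triple $(a,t,q)$, so $\Pr_\nu[\mathrm{err}]\ge \tfrac{1}{2k}\,\E_{A,T,R}\bigl[\sum_q\mathrm{err}(A,T,q,R)\bigr]$. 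Rearranging yields $\sum_q\Pr[\mathrm{err}]\le 40\delta\le 1/50$, hence $\Pr(U_2)\le 1/50$.

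The main subtlety is the $Y_i$-decoupling in the $U_1$ step: verifying that neither of the two abort mechanisms depends on $Y_i$ (this uses the one-way, decreasing-index structure of the game, by which every player who could cause a late error is blind to Bob's $Y_i$), so that the unconditional $1/50$ abort bound can be applied with $q$ pinned to $A_T$. The $U_2$ step is then a routine exchange of sum and expectation made possible by the uniform $1/(2k)$ floor that $\nu$ places on each value of $q$.
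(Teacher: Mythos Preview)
Your proposal is correct and follows the same overall structure as the paper's proof: a union bound $\Pr(\text{not good})\le \Pr(U_1)+\Pr(U_2)$, with separate bounds on each piece. The details differ slightly.

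For $U_1$, the paper does not use your independence observation; it simply writes $\Pr(U_1)=\Pr(\text{abort}\mid Q=A_T)\le \Pr(\text{abort})/\Pr(Q=A_T)\le (1/50)/(1/2)=1/25$. Your argument that the abort event is determined by $(A,T,R)$ alone (since both the collision test and the computation of $s$ involve only Alice's data and the outputs of players $P_j$ with $v_j>v_i$, none of which see $(V_i,Y_i)$) is sound and yields the sharper $\Pr(U_1)\le 1/50$.

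For $U_2$, the paper also union-bounds over $q\in[k]$, then passes from $\sum_q \Pr[\mathrm{err}(A,T,q,R)]$ to $k\cdot \Pr_\nu[\mathrm{err}]$ in one line and quotes the conditional error bound $1/(100k)$ to get $1/100$. Your explicit ``$1/(2k)$ floor'' computation is exactly the justification that step needs (and gives a factor $2k$ rather than $k$), yielding $\Pr(U_2)\le 40\delta\le 1/50$. Either way, the totals ($1/25$ for you, $1/20$ for the paper) meet the stated bound.
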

\begin{proof}
By union bound, we only need to show that the probability 
$$Pr((T,A,R)\in U_1)+Pr((T,A,R)\in U_2)\le 9/20.$$
We have 
\begin{eqnarray*}
Pr((T,A,R)\in U_1) &=& \Pr(\Pi_i(A,T,A_T,R)=abort)\\
&=&\Pr(\Pi_i(A,T,Q,R)=abort|Q=A_T)\\
&=& \Pr(\text{protocol aborts}|Q=A_T).
\end{eqnarray*}
Since $\Pr(Q=A_T)=1/2$ and  $\Pr(\text{protocol aborts})\le 1/50$, we have $\Pr((T,A,R)\in U_1)\le 1/25$.
We also have
\begin{eqnarray*}
\Pr((T,A,R)\in U_2) &=&\Pr\left[\vee_{q\in[k]}(\Pi_i(A,T,q,R)\neq f(A,T,q) \wedge  \Pi_i(A,T,q,R)\neq abort)\right]\\
&\le& \sum_{q\in[k]}\Pr\left[ \Pi_i(A,T,q,R)\neq f(A,T,q) \wedge  \Pi_i(A,T,q,R)\neq abort\right] \\
&\le&\sum_{q\in[k]} \Pr\left[\Pi_i(A,T,q,R)\neq f(A,T,q)|\Pi_i(A,T,q,R)\neq abort\right]\\
&\le& k\cdot \Pr\left[\Pi_i(A,T,Q,R)\neq f(A,T,Q)|\Pi_i(A,T,Q,R)\neq abort\right]\\
&\le& k \cdot \frac{1}{100k} =1/100.
\end{eqnarray*}
So $Pr((T,A,R)\text{ is not good})\le 1/20$.
\end{proof}
As the distribution of $T$ is uniform, we have
$$\sum_{t=1}^N\Pr((t,A,V_{-i},Y_{-i})\text{ is not good})=N\cdot \Pr((T,A,V_{-i},Y_{-i})\text{ is not good})\le 9N/20$$

\end{proof}

\end{document}